\newtheorem{theorem}{Theorem}
\newtheorem{corollary}[theorem]{Corollary}
\newtheorem{lemma}{Lemma}[]
\newtheorem{remark}{Remark}[]
\author{Da-Quan Jiang\thanks{LMAM, School of Mathematical Sciences \& Center for Statistical Science, Peking University, Beijing 100871, P.R. China.} \and Yue Wang\thanks{Department of Applied Mathematics, University of Washington, Seattle, WA 98195, USA. Email address: yuewang@uw.edu}\and Da Zhou\thanks{School of Mathematical Sciences, Xiamen University, Xiamen 361005, P.R. China.}}
\begin{document}

\title{Phenotypic Equilibrium as Probabilistic Convergence in Multi-phenotype Cell Population Dynamics}

\date{}
\maketitle

\begin{abstract}
We consider the cell population dynamics with $n$ different phenotypes. Both the Markovian branching process model (stochastic model) and the ordinary differential equation (ODE) system model (deterministic model) are presented, and exploited to investigate the dynamics of the phenotypic proportions. We will prove that in both models, these proportions will tend to constants regardless of initial population states (``phenotypic equilibrium'') under weak conditions, which explains the experimental phenomenon in Gupta et al.'s paper. We also
prove that Gupta et al.'s explanation is the ODE model under a special assumption. As an application, we will give sufficient and necessary conditions under which the proportion of one phenotype tends to $0$ (die out) or $1$ (dominate). We also extend our results to non-Markovian cases.
\begin{flushleft}
{\bf KEY WORDS:} population dynamics, Markov chain, asymptotic behavior, branching process, phenotypic equilibrium
\end{flushleft}
\begin{flushleft}
{\bf 2010 Mathematics Subject Classification:} 60J85, 92D25, 34D05
\end{flushleft}

\end{abstract}

\section{Introduction}
With the same genetic background, cell population may have different cellular phenotypes. This has been one of the major topics in the research of cell population dynamics \cite{AW10,KL05}. Very recently much attention has been paid to the stochastic conversions between different phenotypes \cite{dSdS13,Gupta}. For example, we know that cancer stem cells can give rise to cancer non-stem cells, but cancer non-stem cells can also transform back to cancer stem cells \cite{ZWL13,YQ12}. Generally, we can use a branching process (stochastic model) \cite{Jagers,KA02,YMN,YY09,YY10} or an ODE system (deterministic model) \cite{Murray} to describe the dynamics of  such cell population with multiple phenotypes. However, in many experimental settings, it is difficult or even impossible to count the total cell population \cite{Clayton,YY09,YY10}. Thus in the last fifty years, people began to consider the proportions of cell individuals with distinct phenotypes instead of the absolute numbers of cells of various phenotypes \cite{Jagers}.

We know that through multistep accumulation gene mutations, healthy cells gradually transform to malignant cancer cells, which is the current view of cancer progression \cite{HW00}. Among those mutations, most of them are neutral (``passenger mutations'') and have no effect on cell proliferation. Only a small portion of mutations will bring growth advantage (``driver mutations'') \cite{Bozic}. Since the emerging of mutations can be regarded as purely stochastic, we can model such procedure with multitype branching processes (cf. Bozic et al.'s model \cite{Bozic,KC13}). There have been some results about the emerging time for certain number of driver mutations, and the relation between the number of emerged driver mutations and emerged passenger mutations \cite{Bozic,MK15}. 

In the experiments on breast cancer cell lines, Gupta et al. \cite{Gupta} found that the proportion of each phenotype will tend to a certain constant regardless of the initial population states (``phenotypic equilibrium''). They built a Markovian model, assuming that the evolution of the phenotypic proportions satisfies an $n$-state Markov chain, and used the ergodicity of the Markov chain to explain this phenomenon \cite{Gupta}. However, we find that the Markovian model is just the ODE system model under a special condition. We determine this condition and its biological meaning. Furthermore, we try to remove this condition and explain the experimental phenomenon in \cite{Gupta} under more general context.

In the deterministic model (ODE system), we only consider the average behavior of cell population dynamics (which requires a large initial population). However, using the stochastic model (branching processes), we can study the trajectory behavior. We prove that the proportions will converge not only on average, but also almost surely. This implies that even with a small initial population, we can still observe ``phenotypic equilibrium''. 

In the theory of multitype branching processes, people have observed similar proportion convergence phenomenon and proved such phenomenon in several limit theorems under different conditions \cite{KS67,AN1972,Janson}. Those are possible ways to explain ``phenotypic equilibrium'', but  those required conditions may not be satisfied in experiments. Thus we improve those limit theorems by dropping redundant conditions. We will see that the conditions we need are all biologically reasonable. Therefore, we give a stochastic explanation of ``phenotypic equilibrium''. This result may also be of interests to probabilists.

Generally we only consider Markovian branching processes, but sometimes the biological process is not memoryless, thus we need to consider non-Markovian branching processes. We show that under some conditions, the non-Markovian branching processes can be transformed into Markovian branching processes. Using this trick, we demonstrate similar results for non-Markovian branching processes.

In Section \ref{S2}, we will define some notations, and give the mathematical description of our models, which is based on \cite{KA02} and \cite{Janson}. In Section \ref{S3}, we will describe under which condition the deterministic model becomes the Markovian model in \cite{Gupta}. In Section \ref{S4}, we will prove that under some mild conditions, the ``phenotypic equilibrium'' phenomenon will always happen in the Markovian branching process model. Specifically, we will improve a limit theorem about proportion convergence in multitype branching processes. We will also apply our results to Bozic et al.'s model. In Section \ref{S5}, as an application of our conclusions, we will investigate under what conditions one of the phenotypes will die out or dominate. In Section \ref{S6}, we will show that the above conclusions are still valid in more general cases.

\section{Notations and model description}
\label{S2}
\subsection{Notations}Boldface letter, like $\textbf{A}$, represents matrix. $\textbf{A}'$ means the matrix transpose of $\textbf{A}$. $\textbf{I}$ is the identity matrix. Letter with arrow above, like $\vec{u}$, represents row vectors. $\vec{1}$ and $\vec{0}$ represent all ones and all zeros vectors. Consider the population of cells with $n$ phenotypes: $Y_{1},Y_{2},$ $\cdots{},$$Y_{n}$. In the stochastic model, $\vec{X}(t)=(X_1(t),X_2(t),\cdots, X_n(t))$ is the population at time $t$, where $X_i(t)$ is the population of phenotype $Y_i$. $P_i(t)=X_i(t)/\sum_{i=1}^n X_i(t)$ is the proportion of phenotype $Y_i$, as long as the denominator is not zero. $\vec{P}(t)=(P_1(t),P_2(t),\cdots,P_n(t))$. In the deterministic model, $\vec{x}(t)=(x_{1}(t),x_{2}(t),\cdots{},$ $x_{n}(t))$ is the expected population at time $t$. We only consider the case where each $x_i(t)$ is nonnegative, and at least one of them is positive (to guarantee $|\vec{x}|>0$). $|\vec{x}|=\sum_{i=1}^n x_i(t)$ is the total population. $\vec{p}=\vec{x}/|\vec{x}|$ is the proportions of different subpopulations among the total population.

\subsection{Stochastic model} Assume that the population of cells have $n$ phenotypes: $Y_{1},Y_{2},$ $\cdots{},$$Y_{n}$. Assume that all the cells evolve independently. (During the exponential growth period, this assumption is almost true \cite{ZJ}.) We can present the generalized cell divisions, death and phenotypic conversions as the following reaction form:
$$Y_{i}\stackrel{\alpha_i}{\to}d_{i1}Y_{1}+d_{i2}Y_{2}+\cdots{}+d_{in}Y_{n}.$$
It means that for an $Y_i$ cell, it will live an exponential time (we will consider non-exponential lifetime in Section \ref{S6}) with expectation $1/\alpha_i$ and turn into $d_{i1}$ $Y_{1}$ cells, $d_{i2}$ $Y_{2}$ cells, $\cdots$, $d_{in}$ $Y_{n}$ cells, where $d_{i1},d_{i2},\cdots{},d_{in}$ are random variables taking nonnegative integer values. $d_{i1},d_{i2},$ $\cdots{},d_{in}$ are not necessarily independent, but they are assumed to be independent with the exponential reaction time of any cell. 

For example, an asymmetric division $Y_1\to Y_1+Y_2$ means $(d_{11},d_{12},d_{13},\cdots,d_{1n})=(1,1,0,\cdots,0)$. A conversion $Y_1\to Y_2$ means $(d_{11},d_{12},d_{13},\cdots,d_{1n})=(0,1,0,\cdots,0)$. So the probability distribution on the possible reactions gives the joint probability distribution of $d_{i1},d_{i2},$ $\cdots{},d_{in}$.

In fact this is a multitype continuous-time Markovian branching process $\vec{X}(t)$ with state space $(\mathbb{Z}^\ast)^n$, each component of which represents the population of a phenotype, as defined in \cite{AN1972} and \cite{Janson}. For example, if one $Y_1$ cell splits symmetrically, the process will move from the state $\vec{X}=(s_1,s_2,\cdots{},s_n)$ to the state $\vec{X}=(s_{1}+1,s_2,\cdots{},s_n)$. We require that $\mathbb{E}d_{ij}^2<\infty$, $\forall i,j$. (In experiments, $d_{ij}$ is bounded, thus $\mathbb{E}d_{ij}^2<\infty$ is always true.) Then this process will  always have finite value in finite time (non-explosion) with probability one \cite[Section V.7.1, (3)--(4)]{AN1972}. 

\subsection{Deterministic model} Now we consider the mathematical expectation of the populations, $\vec{x}=\mathbb{E}(\vec{X})$, which is nonnegative. Based on \cite[Section V.7.2, (5)--(9)]{AN1972}, we have the deterministic model, namely the following ODE system:
\begin{equation}
	\mathrm{d}\vec{x}/\mathrm{d}t=\vec{x}\textbf{A}
	\label{1}
\end{equation}
where $\textbf{A}=
\begin{bmatrix}
	a_{1,1} &\cdots{} & a_{1,n}  \\
	\vdots{} & \ddots{} & \vdots{} \\
	a_{n,1} &\cdots{}& a_{n,n}
\end{bmatrix}$, $a_{i,i}=\alpha_i(\mathbb{E}d_{ii}-1)\ge-\alpha_i$, $a_{i,j}=\alpha_j\mathbb{E}d_{ij}\ge0$ $(i\neq{}j)$. 
Define $\vec{b}=\vec{1}\textbf{A}'$. From (\ref{1}), we have $\mathrm{d}|\vec{x}|/\mathrm{d}t=\vec{x}\vec{b}'$. 

\section{The relation between the Markovian model and the deterministic model}
\label{S3}
In the Markovian model \cite{Gupta}, it is assumed that the population proportions $\vec{p}$ satisfies the Kolmogorov forward equations of an $n$-state Markov chain:
\begin{equation}
	\mathrm{d}\vec{p}/\mathrm{d}t=\vec{p}\textbf{Q}
	\label{2}
\end{equation}
where $\textbf{Q}$ is the transition rate matrix, satisfying $\vec{1}\textbf{Q}'=\vec{0}$.
In this section, we will discuss whether such assumption can be satisfied in the deterministic model.

From (\ref{1}), we have
\begin{equation}
	\frac{\mathrm{d}\vec{p}}{\mathrm{d}t}=\frac{\mathrm{d}(\vec{x}/|\vec{x}|)}{\mathrm{d}t}=\frac{|\vec{x}|}{|\vec{x}|^2}\frac{\mathrm{d}\vec{x}}{\mathrm{d}t}-\frac{\vec{x}}{|\vec{x}|^2}\frac{\mathrm{d}|\vec{x}|}{\mathrm{d}t}
	=\frac{\vec{x}\textbf{A}}{|\vec{x}|}-\frac{(\vec{x}\vec{b}')\vec{x}}{|\vec{x}|^2}
	=\vec{p}[\textbf{A}-(\vec{p}\vec{b}')\textbf{I}]
	\label{3}
\end{equation}

If $\vec{b}=k\vec{1}$ for some constant $k$, then (\ref{3}) becomes $\mathrm{d}\vec{p}/\mathrm{d}t=\vec{p}(\textbf{A}-k\textbf{I})$, and $\vec{1}'(\textbf{A}-k\textbf{I})'=\vec{b}'-k\vec{1}'=\vec{0}'$. Thus (\ref{3}) has the same form of (\ref{2}). If $\vec{b}\ne k\vec{1}$, there are non-zero quadratic terms of $p_i(t)$ in (\ref{3}), implying that (\ref{3}) does not have the same form of (\ref{2}).

Notice that $\vec{b}=k\vec{1}$ means
\begin{equation}
	\sum_{i=1}^{n}a_{1,i}=\sum_{i=1}^{n}a_{2,i}=\cdots{}=\sum_{i=1}^{n}a_{n-1,i}=\sum_{i=1}^{n}a_{n,i}(=k).
	\label{4}
\end{equation}
Thus we have
\begin{theorem}
	Equation (\ref{4}) is the sufficient and necessary condition for that the proportions of different phenotypes in the deterministic model (\ref{1}) satisfy the Kolmogorov forward equations of an $n$-state Markov chain.
	\label{8}
\end{theorem}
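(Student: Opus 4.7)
The plan is to prove sufficiency and necessity separately. Sufficiency has essentially already been carried out in the derivation leading to (\ref{5}); I would only need to verify that the resulting coefficient matrix is a genuine Q-matrix. Necessity requires a polynomial-matching argument, exploiting the freedom to vary the initial condition of (\ref{1}).

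For sufficiency, I would assume (\ref{6}) so that every $A_i=0$ and (\ref{4}) collapses into the linear system (\ref{5}). I would then define a candidate Q-matrix $Q=(q_{i,j})$ by setting $q_{i,j}:=a_{j,i}$ for $i\neq j$ and $q_{i,i}:=a_{i,i}-K$, and verify the two defining properties. Off-diagonal nonnegativity follows directly from $a_{j,i}=\alpha_i\mathbb{E}d_{i,j}\ge 0$ for $j\neq i$, as stated in Section~2. The row-sum condition
\begin{equation*}
\sum_{j}q_{i,j}=(a_{i,i}-K)+\sum_{j\neq i}a_{j,i}=\sum_{j=1}^{n}a_{j,i}-K=0
\end{equation*}
is immediate from (\ref{6}). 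With this $Q$, equation (\ref{5}) is literally the forward Kolmogorov system (\ref{2}), completing the sufficiency direction.

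For necessity, I would suppose that for every initial state of (\ref{1}) the associated proportion vector $(p_1(t),\ldots,p_n(t))$ agrees with the distribution of a common $n$-state Markov chain started from $(p_1(0),\ldots,p_n(0))$. Eliminating $p_n$ as in the text, the reduced vector $(p_1,\ldots,p_{n-1})$ must then satisfy the affine system (\ref{3}). But (\ref{1}) also forces it to satisfy (\ref{4}), whose $i$-th right-hand side contains the quadratic contribution $\sum_{j=1}^{n-1}A_j\,p_i p_j$. Since $x(0)$ is arbitrary, the initial proportion $(p_1(0),\ldots,p_{n-1}(0))$ may be taken to be any point in the interior of the simplex, so the two expressions for $\mathrm{d}p_i/\mathrm{d}t$ must agree at every interior point of the simplex, hence agree as polynomials in $(p_1,\ldots,p_{n-1})$. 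Comparing coefficients of the linearly independent monomials $p_i p_1,\ldots,p_i p_{n-1}$ forces $A_j=0$ for every $j=1,\ldots,n-1$, and unwinding the definition $A_j=-\sum_i a_{i,j}+\sum_i a_{i,n}$ reproduces (\ref{6}).

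The step I expect to require the most care is the polynomial-identification in the necessity argument: one must ensure that the two candidate right-hand sides agree on an open subset of the simplex, not merely along a single solution curve, since otherwise cancellations along a degenerate trajectory could mask a nonzero $A_j$. Allowing the initial state $x(0)$ of (\ref{1}) to vary so that $(p_1(0),\ldots,p_{n-1}(0))$ sweeps an open set is the cleanest way to overcome this; once that is secured, the remaining work is a routine matching of polynomial coefficients.
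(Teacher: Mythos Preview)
Your proposal is correct and follows the same approach as the paper: the derivation preceding the theorem already argues that (\ref{4}) reduces to the linear form (\ref{5}) precisely when all the quadratic coefficients $A_i$ vanish, which is (\ref{6}), and then observes that (\ref{5}) has the shape of (\ref{2}). Your write-up is in fact more careful than the paper's in two places---you explicitly verify the row-sum condition $\sum_j q_{i,j}=0$ (the paper only mentions off-diagonal nonnegativity), and you spell out the polynomial-identification argument for necessity by varying the initial data over an open subset of the simplex, whereas the paper simply asserts that the second-order terms must vanish.
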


Now we know that the Markovian model is a special case of the deterministic model. In biology, (\ref{4}) means that the growth rates (average number of descendants produced per unit time) of different phenotypes are the same. This condition might be well satisfied for breast cancer cells, which explains why the data fitting in \cite{Gupta} is satisfactory.

\section{Asymptotic behavior in general cases}
\label{S4}
In general cases, (\ref{4}) is not satisfied since different phenotypes may differ in cell cycling time \cite{PCS,FK08}, then the Markovian model is invalid. Thus we need other methods to study the asymptotic behavior of the population dynamics. In this section, we will prove that under some mild conditions, the proportions of different phenotypes will tend to some constants regardless of initial population states.

From Perron-Frobenius theorem \cite{Seneta,Karlin}, we know that $\bf A$ has a real eigenvalue $\lambda_1$ (called Perron eigenvalue), such that for any eigenvalue $\mu\ne \lambda_1$, Re $\mu<\lambda_1$. $\lambda_1$ has a left eigenvector $\vec{u}$$=(u_1,u_2,\cdots ,u_n)$ (called Perron eigenvector), satisfying $u_i\ge 0, \forall i$ and $\sum_{i=1}^n u_i=1$. When $\lambda_1$ is simple, such $\vec{u}$ is unique. 
We know that the set of all $n$-order real square matrices with repeated eigenvalue has measure $0$ (as a subset of $\mathbb{R}^{n^2}$) \cite{ZWW}. 
Thus it is reasonable to assume that $\lambda_1$ is simple.

\subsection{Deterministic model}
We have proved the following theorem in Appendix B of \cite{ZWW}.
\begin{theorem}
	Assume that $\lambda_1$ is simple. Starting from any initial value except for the point in some zero-measure set, we have $(x_1(t),x_2(t),\cdots,x_n(t))/\exp(\lambda_1 t)$ $\rightarrow$$ c\vec{u}$ as $t\rightarrow \infty$, where $c>0$ is a constant. In this case, the solution of (\ref{4}) will tend to $\vec{u}$ as $t\rightarrow \infty$. Thus (\ref{4}) has one and only one stable fixed point $\vec{u}$ and no stable limit cycle. 
	\label{12}
\end{theorem}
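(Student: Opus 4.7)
The plan is to reduce Theorem \ref{12} to a Jordan-form analysis of $e^{\mathbf{A}t}$, together with Perron--Frobenius applied to the shifted matrix $\mathbf{A}+\alpha I$, and then to translate the result back to phenotypic proportions by division.

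First I would set up the invariant splitting. Because every off-diagonal entry of $\mathbf{A}$ is nonnegative, for any sufficiently large $\alpha$ the matrix $\mathbf{A}+\alpha I$ is entrywise nonnegative, so Perron--Frobenius produces a nonnegative left eigenvector $\mathbf{v}$ for $\lambda_1$ in addition to the Perron right eigenvector $\mathbf{u}$. Since $\lambda_1$ is simple, its generalized eigenspace is $\mathrm{span}(\mathbf{u})$, and I decompose $\mathbb{R}^n=\mathrm{span}(\mathbf{u})\oplus W$, where $W$ is the direct sum of the generalized eigenspaces of the remaining eigenvalues. Normalizing $\mathbf{v}$ by $\mathbf{v}^T\mathbf{u}=1$, the spectral projector onto $\mathrm{span}(\mathbf{u})$ along $W$ is $P\mathbf{y}=(\mathbf{v}^T\mathbf{y})\mathbf{u}$.

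Next I would apply this splitting to the initial condition: write $\mathbf{x}(0)=c\mathbf{u}+\mathbf{w}$ with $c=\mathbf{v}^T\mathbf{x}(0)$ and $\mathbf{w}\in W$. Linearity of (\ref{1}) then gives
\begin{equation*}
e^{-\lambda_1 t}\mathbf{x}(t)=c\mathbf{u}+e^{(\mathbf{A}-\lambda_1 I)t}\mathbf{w}.
\end{equation*}
All eigenvalues of $\mathbf{A}|_W-\lambda_1 I$ have strictly negative real part, so a standard Jordan-block estimate yields $\|e^{(\mathbf{A}-\lambda_1 I)t}\mathbf{w}\|\le C\|\mathbf{w}\|(1+t)^{n-1}e^{-\delta t}$ for some $\delta>0$ independent of $\mathbf{w}$, and this tends to $0$. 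Hence $e^{-\lambda_1 t}\mathbf{x}(t)\to c\mathbf{u}$. The exceptional zero-measure set is precisely the hyperplane $W=\{\mathbf{y}:\mathbf{v}^T\mathbf{y}=0\}$; for initial states in the biologically relevant nonnegative orthant one has $c=\mathbf{v}^T\mathbf{x}(0)\ge 0$ because $\mathbf{v}\ge 0$, so $c>0$ whenever $\mathbf{x}(0)\notin W$.

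The claims about (\ref{4}) then follow by division: writing $S(t)=\sum_i x_i(t)$, one gets $e^{-\lambda_1 t}S(t)\to c\sum_i u_i=c>0$ because $\sum_i u_i=1$, and therefore $\mathbf{p}(t)=\mathbf{x}(t)/S(t)\to\mathbf{u}$. This identifies $\mathbf{u}$ as a fixed point of (\ref{4}); any other candidate fixed point in the probability simplex would have to be a nonnegative eigenvector of $\mathbf{A}$ for an eigenvalue other than $\lambda_1$, and since every trajectory starting off $W$ in its neighborhood is swept to $\mathbf{u}$, no such point can be stable and no stable limit cycle can exist. The main obstacle is the Jordan bookkeeping on $W$, in particular controlling the polynomial factors coming from nontrivial Jordan blocks at the subdominant eigenvalues and cleanly identifying the exceptional hyperplane; the Perron--Frobenius input is routine after the shift $\mathbf{A}\mapsto\mathbf{A}+\alpha I$.
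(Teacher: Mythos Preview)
The paper does not actually prove Theorem \ref{12}; it simply cites Appendix~B of \cite{ZWW}. So there is no in-paper argument to compare against, and your proposal has to be judged on its own merits. It is the standard and correct route: shift $\mathbf{A}$ to a nonnegative matrix to invoke Perron--Frobenius for both the right eigenvector $\mathbf{u}$ and a nonnegative left eigenvector $\mathbf{v}$, use simplicity of $\lambda_1$ to split $\mathbb{R}^n=\mathrm{span}(\mathbf{u})\oplus W$ with spectral projector $\mathbf{u}\mathbf{v}^T/(\mathbf{v}^T\mathbf{u})$, and let the spectral gap kill the $W$-component of $e^{-\lambda_1 t}\mathbf{x}(t)$. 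The exceptional set $\{\mathbf{v}^T\mathbf{x}(0)=0\}$ is a hyperplane of measure zero, and on the nonnegative orthant $c=\mathbf{v}^T\mathbf{x}(0)\ge 0$ forces $c>0$ off that hyperplane. The passage to proportions and the exclusion of other stable fixed points or limit cycles then follow as you describe.

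Two small points are worth tightening. First, the normalization $\mathbf{v}^T\mathbf{u}=1$ requires $\mathbf{v}^T\mathbf{u}\neq 0$; this holds automatically for a simple eigenvalue (the left eigenvector annihilates $W$, so if it also annihilated $\mathbf{u}$ it would vanish), but since for reducible $\mathbf{A}$ neither $\mathbf{u}$ nor $\mathbf{v}$ need be strictly positive, it is worth saying explicitly. Second, your assertion that any other fixed point of (\ref{4}) in the simplex must be an eigenvector of $\mathbf{A}$ deserves one line of justification: if $\mathbf{p}^*$ is fixed then the trajectory $\mathbf{x}(t)=S(t)\mathbf{p}^*$ solves (\ref{1}), whence $\mathbf{A}\mathbf{p}^*=(\dot S/S)\,\mathbf{p}^*$. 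With those two remarks added, the argument is complete.
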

This gives a satisfactory deterministic explanation of the phenotypic equilibrium phenomenon reported in \cite{Gupta}.
\begin{remark}
	If $\lambda_1$ is not simple, then the convergence result may not hold. Consider $\bf A$ with $a_{i,j}=0, \forall i\ne j$ and $a_{i,i}=1,\forall i$. Here $\lambda_1=1$ is not simple, $x_i(t)=x_i(0)\exp(\lambda_1 t)$, and $(x_1(t),x_2(t),\cdots,x_n(t))/\exp(\lambda_1 t)=(x_1(0),x_2(0),\cdots,x_n(0))$ will never change. Convergence to a common point will not occur.
\end{remark}

\subsection{Stochastic model}
\label{S42}
Since 1960s, probabilists proved that for a continuous-time multitype branching process, $\vec{X}(t)/\mathrm{e}^{\lambda_1 t}\rightarrow W\vec{u}$ under different conditions, where $W$ is a nonnegative random variable. In \cite{Ath68}, \cite{AN1972} and \cite{Smythe}, it is required that $\lambda_1>0$ and $\bf A$ is irreducible (this implies $\lambda_1$ is simple). In \cite{AN1972} it is proved that $W=0$ or $W>0$ according to whether the population will become extinct. In \cite{KS67}, it is required that the branching process is discrete in time. In \cite{YY09,YY10} it is required that the initial population tends to infinity. Janson \cite{Janson} requires that $\lambda_1>0$, $\lambda_1$ is simple, and assumes a special condition about communicating classes structure (see Remark \ref{R2}). Based on \cite{Janson} and \cite{AN1972}, we will prove the convergence theorem without Janson's last assumption (Theorem \ref{42}). We can see the benefit of this improvement in Section \ref{S5}.

\subsubsection{Preliminaries}
\label{pre}
In this section, we assume that $\lambda_1$ is simple and positive. $\lambda_1>0$ means that the total cell population is increasing.

Sometimes, the transformation from one phenotype to another phenotype is not reversible. For example, a mature human red blood cell (which loses its nucleus) cannot transform back to a zygote. Thus we need to classify phenotypes according to communicating behaviors. In mathematical language, we need to study communicating classes of $\bf{A}$ when $\bf{A}$ is reducible. 

If a cell of phenotype $Y_i$ can produce (directly or indirectly) a cell of phenotype $Y_j$ and vice versa, then we say phenotype $Y_i$ communicates with phenotype $Y_j$ ($Y_i\leftrightarrow Y_j$). Since ``$\leftrightarrow$'' is an equivalent relation, we can divide the $n$ phenotypes into several disjoint sets (called communicating classes) according to $\bf A$ \cite{Norris}. Then we can order the classes and rearrange the phenotypes suitably to make $\bf A$ block-triangular. (Each diagonal block corresponds to a communicating class.) Thus the eigenvalues of $\bf A$ consist of all eigenvalues of diagonal blocks. Every eigenvalue corresponds to a diagonal block, and then corresponds to a communicating class. (See \cite{Janson} and \cite{KS67} for details.)

Denote the communicating class corresponding to the Perron eigenvalue $\lambda_1$ by $T$.

For example, consider matrix 
$\bf{A}=
\begin{bmatrix}
	\bf{D_1}&\bf{W}& \bf{W}& \bf{0}  \\
	\bf{0}& \bf{D_2}& \bf{0}&\bf{W} \\
	\bf{0}& \bf{0}&\bf{D_3}&\bf{W} \\
	\bf{0}&\bf{0}&\bf{0}&\bf{D_4}
\end{bmatrix}
$, where each $\bf{W}$ represents a different nonnegative matrix (not $\bf 0$). Assume that $\bf D_3$ has the Perron eigenvalue $\lambda_1$, then $\bf D_3$ corresponds to the communicating class $T$. Denote the other three communicating classes by $C_1$, $C_2$, $C_4$.

For two communicating classes $C_i$ and $C_j$, we write $C_i\Rightarrow C_j$ if there exist phenotype $X_{k_i}\in C_i$ and $X_{k_j}\in C_j$ such that $a_{k_j,k_i}>0$. For two communicating classes $C$ and $D$, we write $C\rightarrow D$ if there exist communicating classes $C=C_1,C_2,\cdots{},C_m=D$ such that $C_i\Rightarrow C_{i+1},\forall{}1\leq i<m$. Stipulate that $C_i\Rightarrow C_i $ and $C_i\rightarrow C_i$.

Then we can illustrate the communicating classes in the example above as \\
$\xymatrix{
	& C_1\ar@{=>}[dl]\ar@{=>}[dr]\ar[dd] &\\
	C_2\ar@{=>}[dr]&     & T \ar@{=>}[dl] \\
	&C_4&       
}$\\

For a communicating class $C$, define $\hat{C}=\{Y_i | Y_i\in C_j , C_j\rightarrow C\}$. In other words, $\hat{C}$ is the set of all phenotypes that can produce (directly or indirectly) phenotypes in $C$. In the example above $\hat{T}=C_1\cup T$.

For a communicating class $C$, define $\bar{C}=\{Y_i | Y_i\in C_j , C\rightarrow C_j\}$. In other words, $\bar{C}$ is the set of all phenotypes that can be produced (directly or indirectly) by phenotypes in $C$. In the example above $\bar{T}=T\cup C_4$.

For the Markovian branching process $\vec{X}(\cdot)$, we say that a cell V with phenotype in $\hat{T}$ becomes ``\textit{essentially extinct}'' if at some time no cell of any phenotypes in $\hat{T}$ is V or its descendants. In other words, V and its descendants become extinct inside $\hat{T}$. We say that a trajectory of the branching process $\vec{X}(\cdot)$ becomes ``\textit{ essentially extinct}'' if at some time no cell of any phenotypes in $\hat{T}$ remains. This means that we can never get a cell with phenotypes in $T$ any more. If so, we cannot have the desired convergence property (Theorem \ref{42}), since the proportions of phenotypes in $T$ should be positive (Lemma \ref{41}). Let the branching process $\vec{X}(\cdot)$ start at any initial population $\vec{X}$$(0)$ as long as it has some cells with phenotypes in $\hat{T}$.

\subsubsection{Results and proofs}
We now state the main result of this paper and then give the proof of it.
\begin{theorem}
	Assume that $\lambda_1$ is simple and positive. Conditioned on essential non-extinction, we have almost surely $(P_1(t),P_2(t),\cdots ,P_n(t))\rightarrow \vec{u}=(u_1,u_2,$ $\cdots ,u_n)$ as $t\rightarrow \infty$.
	\label{42}
\end{theorem}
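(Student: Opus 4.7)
The plan is to decompose the branching process by its communicating-class structure and reduce the claim to the convergence results in \cite{Janson} and \cite{AN1972} applied, not to the full process, but to a sub-process for which Janson's extra assumption is automatically satisfied. A preliminary linear-algebra observation drives the reduction: because $\lambda_1$ is simple, it is the Perron eigenvalue of exactly one diagonal block of the block-triangularized $\mathbf A$, namely the one indexed by $T$, while every other diagonal block has all eigenvalues with strictly smaller real part. Solving $\mathbf A \vec u=\lambda_1\vec u$ block-by-block then forces $u_i=0$ for every phenotype $X_i\notin \bar T$. Hence proving the theorem amounts to showing (i) the total mass outside $\bar T$ is negligible compared with $e^{\lambda_1 t}$, and (ii) on $\bar T$ one has $X^*_i(t)/e^{\lambda_1 t}\to W u_i$ almost surely with $W>0$ precisely on essential non-extinction.

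First I would establish that, conditioned on essential non-extinction, a cell with phenotype in $T$ is produced at some almost surely finite random time $\tau$. The sub-tree consisting of cells with phenotypes in $\hat T$ is itself a multi-type branching process whose principal submatrix has Perron eigenvalue $\lambda_1$ with $T$ as unique dominant class; moreover every class in $\hat T$ reaches $T$ by definition. A standard extinction--non-extinction dichotomy (cf.\ \cite[Sec.\ V.3]{AN1972}) then gives that the $\hat T$-sub-tree either becomes extinct---i.e., essentially extinct in the sense of the preliminaries---or almost surely produces descendants in $T$.

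Second, I would restrict attention to the descendants inside $\bar T$ of the $T$-cells produced from $\tau$ onward. Inside $\bar T$, the matrix $\mathbf A_{\bar T}$ has $T$ as its unique top communicating class, every class in $\bar T$ is reached from $T$, and $\lambda_1$ is still the simple Perron eigenvalue; this is exactly the configuration in which the extra communicating-class assumption used by \cite{Janson} holds trivially. Applying Janson's theorem to this sub-process yields
\[
\frac{X^*_i(t)}{e^{\lambda_1 t}}\longrightarrow W u_i \qquad (X_i\in\bar T),
\]
almost surely on the sub-process's non-extinction event, for some nonnegative random variable $W$. A Borel--Cantelli argument applied to the successive $T$-cells appearing after $\tau$ (each has a fixed positive probability of launching a non-extinguishing descendant tree) will identify $\{W>0\}$, up to a null set, with the event of essential non-extinction of the original process.

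Finally, I would bound the populations in the remaining classes. For any $X_i\in\hat T\setminus T$, the evolution is governed by a principal submatrix whose Perron eigenvalue is strictly less than $\lambda_1$, so moment or martingale bounds from \cite{AN1972} give $X^*_i(t)=o(e^{\lambda_1 t})$ almost surely; the same estimate applies to classes lying in neither $\hat T$ nor $\bar T$, which receive no input from $T$'s exponential growth at all. Combining this with the sub-process limit and using $\sum_{X_j\in\bar T}u_j=1$ gives $\sum_j X^*_j(t)\sim W e^{\lambda_1 t}$, from which $P^*_i(t)\to u_i$ follows by division. The step I expect to be the main obstacle is the patching in step three: Janson's theorem produces one scaling random variable per sub-tree launched by each $T$-cell, and one must verify that the sum of these contributions assembles into a single well-defined random factor $W$ that is almost surely positive exactly on the event of essential non-extinction, despite the fact that the initial population may sit entirely outside $T$ and contribute only indirectly through its descendants.
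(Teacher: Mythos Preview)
Your plan is sound in spirit and overlaps substantially with the paper, but it is more complicated than necessary because you miss one key observation. In Janson's framework, the communicating-class assumption you are trying to circumvent (his (A6), that $\bar T$ contains all phenotypes) is used \emph{only} for the identification of $\{W=0\}$ with extinction (Lemma~9.7 in \cite{Janson}), not for the almost-sure convergence $e^{-\lambda_1 t}\mathbf X^*(t)\to W\mathbf u$ itself (Lemma~9.8 in \cite{Janson}). The paper therefore applies Lemma~9.8 directly to the full process and obtains a single global $W$ with $e^{-\lambda_1 t}\mathbf X^*(t)\to W\mathbf u$; this makes your entire decomposition---the sub-process on $\bar T$, the $o(e^{\lambda_1 t})$ bounds outside $\bar T$, and especially the patching of the per-cell scaling factors that worries you---unnecessary.

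What remains, and what the paper actually proves, is precisely the step you isolate in the middle of your outline: that $\{W>0\}$ coincides, up to a null set, with essential non-extinction. The paper's argument here is close to yours. It shows, via a Borel--Cantelli variant, that on essential non-extinction one almost surely finds at some finite time a cell in $T$ that is itself essentially non-extinct; restricting to the descendants of that single cell one is inside $\bar T$, so Janson's Lemma~9.7 applies and forces $W>0$. The converse is immediate from $u_j>0$ for $j\in T$. So the paper's route is: global convergence first (no (A6) needed), then a localization to a single $T$-cell only to identify $\{W=0\}$. Your order of operations---localize first, then reassemble a global limit---creates the patching obstacle you correctly flag, and also leaves you needing almost-sure (not merely in-expectation) growth bounds $X^*_i(t)=o(e^{\lambda_1 t})$ on the populations outside $\bar T$, a step that is not automatic when the Perron eigenvalue of the complementary block is nonpositive.
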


\begin{lemma}
	Assume that $\lambda_1$ is simple. If for some $i\not=j$, $a_{i,j}>0$ in (\ref{1}), then $u_j>0\Rightarrow{}u_i>0$.
	\label{10}
\end{lemma}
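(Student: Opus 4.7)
The plan is to work directly with the Perron eigenvector identity $\mathbf{A}\mathbf{u} = \lambda_1 \mathbf{u}$, isolating the $i$-th coordinate. Writing out the $i$-th row gives
\[
\lambda_1 u_i = \sum_{k=1}^n a_{i,k} u_k = a_{i,i} u_i + \sum_{k \neq i} a_{i,k} u_k,
\]
which rearranges to $(\lambda_1 - a_{i,i}) u_i = \sum_{k \neq i} a_{i,k} u_k$. Since the off-diagonal entries $a_{i,k}$ ($k \neq i$) are nonnegative and each $u_k \geq 0$, every summand on the right is nonnegative; moreover, the particular summand $a_{i,j} u_j$ is strictly positive by hypothesis. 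Hence the right-hand side is strictly positive.

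The next step is to establish that $\lambda_1 - a_{i,i} > 0$, so that dividing yields $u_i > 0$. The inequality $\lambda_1 \geq a_{i,i}$ follows from the standard fact that for a Metzler matrix $\mathbf{A}$ (nonnegative off-diagonal entries), the shifted matrix $\mathbf{A} + c\mathbf{I}$ is nonnegative for $c \geq \max_i \alpha_i$, its Perron eigenvalue is $\lambda_1 + c$, and the Perron eigenvalue of a nonnegative matrix dominates every diagonal entry; so $\lambda_1 + c \geq a_{i,i} + c$. I would only need to remark that strict inequality $\lambda_1 > a_{i,i}$ holds here: if $\lambda_1 = a_{i,i}$, the left-hand side of the displayed identity vanishes while the right-hand side is strictly positive, a contradiction. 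This forces $\lambda_1 > a_{i,i}$, and therefore $u_i > 0$.

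There is essentially no obstacle here: the whole argument is a one-line manipulation of the eigenvector identity, together with the observation that the Perron eigenvalue dominates the diagonal entries of a Metzler matrix. The only delicate point to state carefully is why we may apply Perron--Frobenius after shifting by $c \mathbf{I}$ (this just uses that the eigenvectors are unchanged and all eigenvalues shift by $c$), and this is already implicit in the use of the Perron eigenvalue/eigenvector in the paper. No use of simplicity of $\lambda_1$ or positivity of $\lambda_1$ is actually needed for the conclusion itself; simplicity is only invoked to ensure uniqueness of the vector $\mathbf{u}$ being discussed.
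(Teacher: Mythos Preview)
Your argument is correct. The paper's own proof takes a slightly different route: it assumes (for contradiction) that $u_i=0$, substitutes $\mathbf{u}$ into the $i$-th equation of the nonlinear proportion ODE~(\ref{4}), and invokes Theorem~\ref{12} (that $\mathbf{u}$ is a fixed point of~(\ref{4})) to obtain $0=\sum_{k\neq i}a_{i,k}u_k\ge a_{i,j}u_j>0$. You instead work directly with the linear eigenvector identity $\mathbf{A}\mathbf{u}=\lambda_1\mathbf{u}$, which is more self-contained since it bypasses both the proportion ODE and Theorem~\ref{12}; the underlying computation is the same, only the framing differs.

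One remark: your detour through Perron--Frobenius on the shifted matrix $\mathbf{A}+c\mathbf{I}$ to establish $\lambda_1\ge a_{i,i}$ is not needed. From $(\lambda_1-a_{i,i})u_i=\sum_{k\neq i}a_{i,k}u_k>0$ and $u_i\ge 0$, the hypothesis $u_i=0$ already forces the left side to vanish while the right side is strictly positive, a contradiction; hence $u_i>0$ directly. (The inequality $\lambda_1>a_{i,i}$ then falls out as a byproduct rather than a prerequisite.)
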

\begin{proof}
	Without loss of generality, let $i=1,j=2$. Assume $u_1=0$, $u_2>0$. Let $(p_1, p_2, \cdots,p_n)=\vec{u}$$=(u_1,u_2,\cdots ,u_n)$ in the first equation of (\ref{3}).  Then it becomes $\mathrm{d}p_1/\mathrm{d}t=\sum_{k=2}^{n}a_{1,k}u_k>0$. However $\vec{u}$ is a fixed point of (\ref{3}) according to Theorem \ref{12}, thus we should have $\mathrm{d}p_1/\mathrm{d}t=0$, which is a contradiction.
\end{proof}

\begin{lemma}
	Assume that $\lambda_1$ is simple. Then $u_i>0\iff Y_i\in \bar{T}$.
	\label{41}
\end{lemma}
\begin{proof}
	Apply the Perron-Frobenius theorem to $\textbf{A}_{\bar T}$, the restriction of $\bf A$ on $\bar{T}$, and let $\vec{w}$ be its Perron eigenvector. $\textbf{w}_T$, the restriction of $\vec{w}$ on $T$ cannot be $\vec{0}$, otherwise $\lambda_1$ is an eigenvalue of $\textbf{A}_{\bar T\backslash T}$,  a contradiction. From Lemma \ref{10} we know that $\vec{w}$ is positive. Set $u_i=w_i$ if $Y_i\in \bar{T}$, and $u_j=0$ if $Y_j\notin \bar{T}$, then $\vec{u}$ is the Perron eigenvector of $\bf A$. Thus $u_i>0\iff Y_i\in \bar{T}$.\\
\end{proof}

\begin{lemma}
	[Lemma 9.8 in \cite{Janson}]Assume that $\lambda_1$ is simple and positive. Then we have almost surely $\mathrm{e}^{-\lambda_1 t}\vec{X}$$(t) \to W\vec{u}$ as $t\to\infty$, where $W$ is a nonnegative random variable, and $\mathbb{P}(W>0)>0$.
	\label{11}
\end{lemma}

\begin{lemma}
	[Lemma 9.7 (ii) and (iii) in \cite{Janson}, originated from Theorem V.7.2 in \cite{AN1972}]Assume that $\lambda_1$ is simple and positive, and $\bar{T}$ contains all phenotypes, then $W=0$ if and only if the branching process becomes essentially extinct almost surely.
	\label{23}
\end{lemma}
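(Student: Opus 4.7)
The statement is a dichotomy tied to Lemma~\ref{11}, which already provides the a.s.\ convergence $\mathrm{e}^{-\lambda_1 t}\mathbf{X^*}(t)\to W\mathbf{u}$. So the plan is to show the two directions separately and, since the result is attributed to \cite{AN1972,Janson}, to set things up as a careful invocation of the classical multi-type Kesten--Stigum/Athreya--Ney dichotomy after verifying that our hypotheses ($\lambda_1$ simple and positive, $\mathbb{E}d_{ij}^2<\infty$, and $\bar{T}$ equal to all phenotypes) make it apply.

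The easy direction ($\Leftarrow$). On the event of essential extinction there is a finite random time $\tau$ after which no cell in $\hat{T}$ is alive. Thereafter the branching process lives entirely on the index set $S':=\{1,\dots,n\}\setminus\hat{T}$, corresponding to classes $C$ with $C\not\to T$. Restricting $\mathbf{A}$ to $S'$ yields a block-triangular matrix whose diagonal blocks are exactly the blocks of $\mathbf{A}$ other than the one associated with $T$; since $\lambda_1$ is the simple Perron eigenvalue attained on the $T$-block, every remaining block has Perron eigenvalue strictly less than $\lambda_1$. Standard first-moment estimates for multi-type branching processes then give $\mathbb{E}\|\mathbf{X^*}(t)\|=O(\mathrm{e}^{\lambda' t})$ for some $\lambda'<\lambda_1$ after time $\tau$, so $\mathrm{e}^{-\lambda_1 t}\mathbf{X^*}(t)\to 0$ a.s.\ on this event, and hence $W=0$ a.s.\ on essential extinction.

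The harder direction ($\Rightarrow$). On essential non-extinction the $\hat{T}$-population survives forever. Because the expected second moments are finite, the multi-type Kesten--Stigum theorem applies to the sub-branching process obtained by tracking only descendants that remain in $\hat{T}$: its mean matrix has Perron eigenvalue $\lambda_1>0$ and right Perron eigenvector consistent with $\mathbf{u}$ (after restriction), so the rescaled population in $\hat{T}$ converges to a nonnegative random variable that is strictly positive on non-extinction of that sub-process. The hypothesis $\bar{T}=\{X_1,\dots,X_n\}$ is then used to propagate positivity from $\hat{T}$ out to the whole process: every phenotype is a descendant class of $T$, so the block-triangular ODE for the mean matrix shows that the $\mathbf{u}$-direction is driven by the $T$-coordinates, and the a.s.\ limit $W\mathbf{u}$ inherits strict positivity of $W$ whenever the $T$-sub-process survives. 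A standard branching 0--1 argument (the event $\{W=0\}$ factors over generations and must coincide, up to null sets, with essential extinction) then upgrades ``$W>0$ with positive probability'' to ``$W>0$ a.s.\ on essential non-extinction.''

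The main obstacle is the bookkeeping in the ($\Rightarrow$) direction: ensuring that the projection onto the $T$-sub-process really matches the normalization in Lemma~\ref{11}, and that the contributions of classes in $\hat{T}\setminus T$ and in $\bar{T}\setminus\hat{T}$ do not spoil the identification of the limit. This is exactly where the hypothesis $\bar{T}=\{X_1,\dots,X_n\}$ enters, by forcing every phenotype to be fed (directly or indirectly) by $T$-cells whose growth rate is $\lambda_1$.
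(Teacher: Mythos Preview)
The paper does not prove this lemma at all; it is quoted verbatim as Lemma~9.7~(ii)--(iii) of \cite{Janson} (with origins in Theorem~V.7.2 of \cite{AN1972}) and used as a black box. So there is no in-paper argument to compare against; the question is simply whether your sketch is a valid proof.

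Your $\Leftarrow$ direction is correct. Once no cell in $\hat T$ remains, none can ever reappear (any type that can produce a type in $\hat T$ is itself in $\hat T$), the residual mean matrix on $\{1,\dots,n\}\setminus\hat T$ has all block eigenvalues with real part strictly below $\lambda_1$, and the conclusion $W=0$ follows.

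Your $\Rightarrow$ direction has a genuine gap. You appeal to ``the multi-type Kesten--Stigum theorem'' for the sub-branching process on $\hat T$, but that sub-process is \emph{not} irreducible in general: classes in $\hat T\setminus T$ feed $T$ but not conversely. The classical dichotomy in \cite{AN1972} is stated for positively regular mean matrices, so it does not apply as written; and invoking a reducible version is circular, since Janson's Lemma~9.7 \emph{is} that reducible extension. The actual proofs proceed via the left-eigenvector martingale $\mathrm{e}^{-\lambda_1 t}\,\mathbf v\!\cdot\!\mathbf X^*(t)$ together with a generating-function fixed-point argument: one shows that the vector $q_i=\mathbb P(W=0\mid\text{one initial type }i)$ satisfies the same fixed-point equation as the essential-extinction probabilities, and that under the $x\log x$ (here, second-moment) condition both equal the minimal fixed point. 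The hypothesis $\bar T=\{X_1,\dots,X_n\}$ is used in that analysis to control the structure of the fixed-point map, not merely to ``propagate positivity'' as you describe. Your closing ``standard branching 0--1 argument'' is the right instinct but, as stated, it \emph{assumes} the identification of $\{W=0\}$ with essential extinction rather than proving it.
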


\begin{remark}
	\label{R2}
	Janson's paper \cite[Section 2]{Janson} has six fundamental assumptions (A1)-(A6). Assumptions (A1)-(A5) have been satisfied in this paper (regarding (A5) as ``the process is not essentially extinct at time $0$''). Assumption (A6) ``$\bar{T}$ contains all phenotypes'' is only used in Lemma \ref{23}. In fact, we will prove (in Lemma \ref{33}) that Lemma \ref{23} is still correct without Assumption (A6). Thus we can drop Assumption (A6) in the main result. Assumption (A6) implies all phenotypes should have the same exponential growth rate $\lambda_1$, and no phenotype will die out or dominate (see Section \ref{S5}), which is not necessarily satisfied in experiments. For example, in Bozic et al.'s paper \cite{Bozic}, they consider tumor cells which gradually cumulate mutations which accelerate cell growth. Since tumor cells with more accelerating mutations cannot switch back to tumor cells with less such mutations, they must have different growth rates. Also, the proportion of tumor cells with less accelerating mutations will gradually decrease to $0$, which contradicts with assumption (A6).  
\end{remark}

The following lemma is a modification of the second Borel-Cantelli lemma. We base our proof on Theorem 2.3.6 in \cite{Durrett}.
\begin{lemma}
	Consider events $B_1, B_2, \cdots, B_n, \cdots$. If for any positive integers $m<n$, we have $\mathbb{P}(\cap^n_{i=m+1}B_i^c)\le (1-\epsilon)^{n-m}$, where $0<\epsilon\le 1$, then $\mathbb{P}(\limsup_{n \rightarrow \infty}$ $B_n)=1$. In other words, almost surely $\{B_n:n\ge 1\}$ will happen infinitely often.
	\label{64}
\end{lemma}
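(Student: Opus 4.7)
The plan is to reduce $\mathbb{P}(\limsup_n B_n)=1$ to showing that for each fixed $m$, the tail event $\bigcap_{n>m} B_n^c$ has probability zero, and then to exploit the hypothesis together with continuity of measure from above. This is essentially the same skeleton as the usual second Borel--Cantelli argument (as presented in Theorem 2.3.6 of \cite{Durrett}), except that the role played by independence there is played here directly by the quantitative bound $(1-\epsilon)^{n-m}$ on the joint probability of a block of complements.

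Concretely, I would first observe that
\begin{equation*}
\bigl\{\limsup_{n\to\infty} B_n\bigr\}^c \;=\; \bigcup_{m=1}^{\infty}\bigcap_{n=m+1}^{\infty} B_n^c,
\end{equation*}
so by countable subadditivity it suffices to prove $\mathbb{P}\bigl(\bigcap_{n=m+1}^{\infty} B_n^c\bigr)=0$ for each fixed $m$. Next, since the sets $\bigcap_{n=m+1}^{N} B_n^c$ are decreasing in $N$, continuity of the probability measure from above gives
\begin{equation*}
\mathbb{P}\!\left(\bigcap_{n=m+1}^{\infty} B_n^c\right) \;=\; \lim_{N\to\infty}\mathbb{P}\!\left(\bigcap_{n=m+1}^{N} B_n^c\right).
\end{equation*}
Now I would plug in the hypothesis with this $m$ and $n=N$ to bound the right-hand side by $\lim_{N\to\infty}(1-\epsilon)^{N-m}$. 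Because $0<\epsilon\le 1$, we have $0\le 1-\epsilon<1$, so this limit is $0$, which finishes the proof after taking the union over $m$.

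There is no real obstacle: the argument is three lines once the decomposition of $\{\limsup B_n\}^c$ is written down. The only subtlety worth flagging is that the hypothesis is stated for \emph{all} pairs $m<n$, which is exactly what is needed for the continuity-from-above step (one does not need, for instance, an independence or pairwise-independence assumption as in the classical Borel--Cantelli statement); the condition $\epsilon>0$ is used precisely to get the geometric decay that forces the limit to vanish, and the restriction $\epsilon\le 1$ only ensures the hypothesis is not vacuous.
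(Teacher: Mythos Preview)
Your proof is correct and follows essentially the same route as the paper: bound $\mathbb{P}\bigl(\bigcap_{i=m+1}^{N} B_i^c\bigr)$ by $(1-\epsilon)^{N-m}$, let $N\to\infty$ to get zero, and then pass to the limsup. The only cosmetic difference is that you take complements and invoke countable subadditivity, whereas the paper stays on the positive side and uses the fact that $\bigcup_{i=M+1}^{\infty} B_i \downarrow \limsup_n B_n$; these are the same argument in dual form.
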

\begin{proof}
	Let $0<M<N<\infty$. $\mathbb{P}(\cap^N_{i=M+1}B_i^c)\le(1-\epsilon)^{N-M}\rightarrow 0$ as $N\rightarrow \infty$. So $\mathbb{P}(\cup^{\infty}_{i=M+1}B_i)=1$ for all $M$, and since $\cup^{\infty}_{i=M+1}B_i\downarrow\limsup_{n\rightarrow\infty} B_n$ it follows that 
	$\mathbb{P}(\limsup_{n\rightarrow\infty} B_n)=1$.
\end{proof}

\begin{lemma}
	For almost every essentially non-extinct trajectory (according to Lemma \ref{11}, the set of such trajectories has positive probability), we can find an essentially non-extinct cell with phenotype in $T$ within finite time. If we can find such cell at time $t$, then we can find such cell at any time $\tau>t$.
	\label{54}
\end{lemma}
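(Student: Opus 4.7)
The plan is to prove the two parts of the lemma in reverse order. The persistence assertion (``found at $t$ implies found at all $\tau>t$'') reduces to a combinatorial application of K\"onig's lemma to the genealogical subtree of the witness cell. The existence assertion requires applying K\"onig's lemma to the entire trajectory's $\hat T$-restricted genealogy and then, if the resulting infinite lineage happens to avoid $T$, invoking the Borel--Cantelli-type Lemma~\ref{64} on off-lineage subtrees to force a $T$-cell to appear anyway.

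For persistence, suppose $Y$ is essentially non-extinct at time $t$ with phenotype in $T$. Every descendant of $Y$ lies in $\bar T$, and the finite class DAG $(\hat T,\Rightarrow)$ is antisymmetric between distinct classes, so $\hat T\cap\bar T=T$ and the genealogical subtree of $Y$ restricted to $\hat T$ lives entirely in $T$. Essential non-extinction makes this subtree infinite, and it is locally finite (finite offspring per cell, via $\mathbb{E}d_{ij}^2<\infty$), so K\"onig's lemma supplies an infinite descendant path $Y=Y^{(0)},Y^{(1)},Y^{(2)},\ldots$ of cells in $T$. Non-explosion of the process forces the birth times along this path to tend to $\infty$, so at every $\tau>t$ one cell on the path is alive, lies in $T$, and is itself essentially non-extinct because the infinite tail of the path fills its own $\hat T$-subtree.

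For existence, I would run the same K\"onig construction on the full initial population's genealogy restricted to $\hat T$: essential non-extinction combined with local finiteness yields an infinite lineage $Y_0,Y_1,Y_2,\ldots$ of essentially non-extinct cells in $\hat T$. Letting $C_k$ denote the class of $Y_k$, we have $C_k=C_{k+1}$ or $C_k\Rightarrow C_{k+1}$; antisymmetry of the finite DAG $(\hat T,\Rightarrow)$ then stabilizes $(C_k)$ at some $C_\infty\in\hat T$. If $C_\infty=T$, any sufficiently late $Y_k$ is a witness. If $C_\infty\ne T$, the communicating-class irreducibility of $C_\infty$ together with $C_\infty\to T$ gives, for some horizon $H$ and some $\epsilon>0$, a uniform lower bound $\mathbb{P}(Z\text{'s subtree contains a $T$-cell within time }H)\ge\epsilon$ for every cell $Z$ in $C_\infty$. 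Defining events $B_k$ that isolate the off-lineage offspring of $Y_k$ and their subtrees within horizon $H$, the Markov branching property yields the requisite conditional independence, and Lemma~\ref{64} produces infinitely many $B_k$ almost surely, hence $T$-cells appearing at finite times. For each such $T$-cell $Z$, the subprocess rooted at $Z$ lives inside $\bar T$ and so satisfies Janson's assumption (A6) as a standalone model; Lemmas~\ref{11} and \ref{23} then give $\mathbb{P}(Z\text{ essentially non-extinct})\ge p_0>0$ uniformly. A second application of Lemma~\ref{64} to the resulting independent Bernoulli-$(\ge p_0)$ trials furnishes an essentially non-extinct $T$-cell at a finite time.

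The most delicate step will be the $C_\infty\ne T$ branch of existence: one must carefully separate the K\"onig lineage from off-lineage offspring and verify the uniform conditional lower bound on $\mathbb{P}(B_k\mid\mathcal{F}_{k-1})$ required by Lemma~\ref{64}. The underlying conditional independence is a direct consequence of the Markov branching property, but handling the edge case in which some $Y_k$ has no off-lineage offspring (the parent happens to produce exactly one child) requires passing to a subsequence of lineage indices at which a genuine multi-offspring division occurs; that such a subsequence is infinite uses that $C_\infty\in\hat T\setminus\{T\}$ forces some phenotype in $C_\infty$ to have positive probability of producing non-$C_\infty$ offspring, for otherwise $C_\infty\not\to T$, contradicting $C_\infty\in\hat T$.
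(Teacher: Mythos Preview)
Your persistence argument (the second assertion) is correct and in fact more explicit than the paper, which treats this part as evident. The identity $\hat T\cap\bar T=T$ and the K\"onig path inside $Y$'s subtree do the job cleanly.

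The existence argument, however, has a genuine gap at exactly the point you flag as ``most delicate.'' The K\"onig lineage $Y_0,Y_1,\ldots$ is determined by the \emph{entire future} of the trajectory: to know that $Y_{k+1}$ is the child of $Y_k$ lying on an infinite $\hat T$-path, you must already know which subtrees are infinite. Consequently the ``off-lineage'' subtrees are \emph{not} conditionally independent of the lineage given the past, and the Markov branching property does not deliver the uniform lower bound $\mathbb{P}(B_k\mid\mathcal{F}_{k-1})\ge\epsilon$ that Lemma~\ref{64} requires. Concretely: if the K\"onig rule is ``leftmost child with infinite $\hat T$-subtree,'' then conditioning on the lineage forces all children to the left of $Y_{k+1}$ to have \emph{finite} $\hat T$-subtrees, biasing the off-lineage law. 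Your edge-case patch (``some phenotype in $C_\infty$ produces non-$C_\infty$ offspring'') compounds the problem: it concerns a phenotype, not the specific cells $Y_k$, and does not guarantee multi-offspring divisions along the realized lineage.

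The paper sidesteps this entirely by discretizing in \emph{time} rather than in genealogical depth. At each integer time $k$ it picks any cell $E_k$ with phenotype in $\hat T\setminus T$ using only the state at time $k$ (hence $\mathcal F_k$-measurable), and sets $B_k=\{E_k$ produces a $T$-cell during $[k,k+1)\}$. The Markov property now applies directly to give $\mathbb{P}(B_k\mid\mathcal F_k)\ge\eta^n$ uniformly, so Lemma~\ref{64} yields infinitely many $T$-cells. A short combinatorial check (no two of these $T$-cells are ancestor and descendant, since each is descended from a cell in $\hat T\setminus T$ still alive after the previous one was born) then gives independent subtrees, and Lemma~\ref{11} finishes. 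This avoids K\"onig altogether and needs only one invocation of Lemma~\ref{64} rather than two.
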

\begin{proof}
	If at some time $t$ all cells with phenotypes in $\hat{T}\setminus T$ die out, then at least one of the remaining cells with phenotypes in $T$ is not essentially extinct. 
	
	Otherwise, at each time $t=k$ ($k\in \mathbb{Z}^+$), there exists one cell $E_k$ with phenotype in $\hat{T}\setminus T$. (For different $k$, $E_k$ may be the same cell.)  Let $B_k$ ($k\in {\mathbb Z}^+$) be the event that during the time interval $[k,k+1)$, the cell $E_k$ produces (directly or indirectly) at least one cell with phenotype in $T$. 
	
	If $B_k$ happens, choose one such cell with phenotype in $T$ and put it in a special set $S$. Consider any two cells $F$ and $G$ in $S$, and assume $F$ is produced in the time interval $[i,i+1)$, $G$ is produced in the time interval $[j,j+1)$, and $i<j$, where $i, j\in\mathbb{Z}^+$. Then $E_j$ is the ancestor of $G$. Since $E_j$ has phenotype in $\hat{T}\setminus T$, and $F$ has phenotype in $T$, $F$ cannot be the ancestor of $E_j$. Since $E_j$ is still alive at time $t=j$, when $F$ has been produced, $E_j$ cannot be the ancestor of $F$. Thus $F$ cannot be the ancestor of $G$. Since $G$ is produced after $F$, $G$ cannot be the ancestor of $F$. In sum, one cell in $S$ cannot be the ancestor of another cell in $S$. Thus all cells in $S$ are independent. 
	
	Consider two phenotypes $Y_i$ and $Y_j$, and assume a cell with phenotype $Y_i$ can produce a cell with phenotype $X_j$ directly, namely $\mathbb{P}(d_{ij}>0)>0$. Because of Markovian property, within a time span of $1/n$, the probability for a cell with phenotype $Y_i$ to produce a cell with phenotype $Y_j$ directly is $\eta_{ij}=[1-\exp(-\alpha_i /n)]\mathbb{P}(d_{ij}>0)>0$. Let $\eta=\min_{i,j}\{\eta_{i,j}:\mathbb{P} (d_{ij}>0)>0\}$. For a cell with phenotype in $\hat{T}\setminus T$, it can produce a cell with phenotype in $T$ within $n$ steps. Thus the probability of $B_k$ is no less than $\eta^n$, regardless of what happens before time $t=k$. 
	
	Now we can use Lemma \ref{64} with $\epsilon=\eta^n$, and there will be an infinite number of cells in $S$, except for a zero-measure set of trajectories. According to Lemma \ref{11}, the probability for one cell in $S$ to become essentially extinct is less than $1$, thus the probability for all cells in $S$ to become essentially extinct is $0$, and at least one cell in $S$ is not essentially extinct, except for a zero-measure set of trajectories.
\end{proof}

\begin{lemma}
	Assume that $\lambda_1$ is simple and positive, then $W=0$ if and only if the branching process becomes essentially extinct almost surely.
	\label{33}
\end{lemma}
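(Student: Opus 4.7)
The plan is to prove the stronger event-wise statement $\{W=0\}=\{\text{essentially extinct}\}$ up to null sets, which implies the stated iff. I would split this into the two inclusions and handle each separately.

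Direction $\{\text{essentially extinct}\}\subseteq\{W=0\}$: On an essentially extinct trajectory no cells in $\hat T$ remain after some finite (random) time $t_0$, so in particular $X^*_i(t)=0$ for all $t>t_0$ whenever $X_i\in T$. Lemma \ref{11} then forces $Wu_i=\lim_{t\to\infty}e^{-\lambda_1 t}X^*_i(t)=0$, and the restriction of $\mathbf u$ to the irreducible diagonal block of $T$ is the strictly positive Perron eigenvector of that block (by Perron--Frobenius applied inside $T$), so $u_i>0$ and hence $W=0$.

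Direction $\{W=0\}\subseteq\{\text{essentially extinct}\}$: Let $E$ denote the event of non-essential-extinction. By Lemma \ref{54}, on $E$ there almost surely exists a finite (random) time $\tau$ and a cell $Y$ of phenotype in $T$ whose forward subtree is itself essentially non-extinct. The key structural observation is that $\bar T$ is closed under reproduction: if $X_i\in\bar T$ (so $T\to C_i$) and $a_{j,i}>0$ (so $C_i\to C_j$), then $T\to C_j$ by transitivity and hence $X_j\in\bar T$. By the strong Markov property at $\tau$, the subtree of $Y$ is therefore itself a Markov branching process, confined to the state space $\bar T$ and starting from a single $T$-cell. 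Regarded as its own branching process on $\bar T$, this subtree still has $T$ as the communicating class of the Perron eigenvalue $\lambda_1$ (with $\lambda_1$ still simple for the restricted matrix), and its ``$\bar{\cdot}\,$'' now equals the whole restricted state space $\bar T$, so Lemma \ref{23} applies. Moreover the identity $\hat T\cap\bar T=T$ shows that essential non-extinction of the subtree in the sense of the original process amounts to the subtree having cells in $T$ at all times, which is exactly the non-essential-extinction of the subtree viewed on $\bar T$. Lemma \ref{23} then yields $W_{\mathrm{sub}}>0$ almost surely on $E$, and since all contributions to $e^{-\lambda_1 t}\mathbf{X}^*(t)$ are nonnegative, componentwise monotonicity of the limit gives $W\ge e^{-\lambda_1\tau}W_{\mathrm{sub}}>0$ almost surely on $E$.

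The main obstacle I anticipate is the careful bookkeeping needed for the second inclusion: verifying closure of $\bar T$ under reproduction so the subtree of $Y$ really stays in $\bar T$, invoking the strong Markov property at the random time $\tau$ produced by Lemma \ref{54} to reinterpret this subtree as a fresh Markov branching process on $\bar T$ started from one $T$-cell, and matching the two notions of essential (non-)extinction via $\hat T\cap\bar T=T$ so that Lemma \ref{23} may be legitimately invoked on the subtree.
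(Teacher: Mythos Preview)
Your proof is correct and follows essentially the same route as the paper: both directions rely on Lemmas \ref{11}, \ref{54}, and \ref{23} in the same way, with the paper likewise invoking $u_j>0$ for $j\in T$ (via Lemma \ref{41}) for the first inclusion and restricting to the subtree of an essentially non-extinct $T$-cell living in $\bar T$ for the second. One small technical point: the paper argues by contradiction and works at a \emph{fixed} deterministic time $t_0$ (chosen large enough that an essentially non-extinct $T$-cell is present with probability at least $P_0/2$) rather than your random $\tau$; this sidesteps any concern that $\tau$---which depends on the future through ``essentially non-extinct''---is not a stopping time, so the phrase ``strong Markov property at $\tau$'' is not quite the right justification. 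Your argument still goes through, since the branching property gives every individual $T$-cell an independent subtree and there are only countably many such cells, but that is the cleaner way to phrase it.
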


\begin{proof}
	$\Leftarrow$: For a trajectory $\vec{X}(\cdot)$ outside the zero-measure exclusion set of Lemma \ref{11}, assume that at some time $\tau\ge0$ (dependent on the trajectory), $X_i(\tau)=0$ for all $Y_i\in \hat{T}$. For any $Y_j\in T$, $0=\lim_{t\to \infty} \mathrm{e}^{-\lambda_1 t}X_j(t)=Wu_j$. From Lemma \ref{41}, $u_j>0$. Thus $W=0$ almost surely.\\
	
	$\Rightarrow$: Assume that $\mathbb{P}(W=0$ $\&$ the trajectory is not essentially extinct$)=P_0>0$. According to Lemma \ref{54}, we can find time $t_0>0$ large enough such that $\mathbb{P} (W=0$ $\&$ the trajectory is not essentially extinct $\&$ there exists an essentially non-extinct cell with phenotype in $T$ at time $t_0)\ge P_0/2>0$. On this set, only consider this essentially non-extinct cell and its descendants from time $t\ge t_0$, then the population is restricted on $\bar{T}$  and we can use Lemma \ref{23}. Now we have $W>0$ except for a zero-measure set of trajectories, which is a contradiction.
\end{proof}

From Lemma \ref{11} and Lemma \ref{33},we can obtain Theorem \ref{42}.

\begin{remark}
	The assumption of $\lambda_1>0$ is not too strong. If $\lambda_1<0$, then from Theorem \ref{12}, the expected populations decay to $\vec{0}$. Therefore this process will become extinct almost surely. For $\lambda_1=0$, consider an example that each cell always have exactly one child, and the child can be any phenotype with equal probability. Then the total population is fixed, and the proportions will always fluctuate, so there is no convergence \cite{Janson}.
\end{remark}

For Gupta el al's experiment, the initial cell population is very large in cancer cell lines, thus the probability of essential extinction is quite small. Therefore, the proportions will almost always tend to the same constants. This gives a satisfactory stochastic explanation of the phenotypic equilibrium phenomenon reported in \cite{Gupta}. 

The deterministic model only reflects the average behavior of many trajectories (or equivalently a large initial population). When the cell number is relatively small, the stochasticity is not negligible, and it is not very reasonable to assume the cell number changes continuously. So the stochastic model is more effective than deterministic model. That is why we also prove the same result for stochastic model.

\begin{remark}
	In Bozic et al.'s model, we divide cells in groups by their number of driver mutations. The growth rate for cells with $k$ driver mutations is $1-(1-s)^k/2$ ($s=0.004$ in \cite{Bozic}). Cells can acquire driver mutations, but cannot lose them. Also we do not consider cell death, so there is no essential extinction. If we only consider cells with no more than $n$ driver mutations, then the population of cells with exactly $n$ mutations will grow with exponential growth rate $1-(1-s)^n/2$, which is larger than that of cells with less driver mutations. So cells with $n$ driver mutations will dominate exponentially fast. Generally, the population of cells with $n$ driver mutations will grow exponentially with rate $1-(1-s)^n/2$, and as long as the next driver mutation emerges, its proportion will decay to $0$ exponentially fast. Also, we should notice that such results are valid for almost every trajectories. Here we do not consider the difference in passenger mutations, since they have no effect on cell growth, and considering them will make the Perron eigenvalue not simple. The model in \cite{Bozic,KC13} is discrete-time, but we can see from Section \ref{S6} that we can still apply our results.
\end{remark}

\section{When will one proportion tend to $0$ or $1$?}
\label{S5}
In population dynamics, we are also concerned about when one phenotype dies out or dominates. In terms of the notations in this paper, we need to consider when $P_i(t)\to 0$ or $P_i(t)\to 1$ as $t\rightarrow \infty$.\\

In this section, we will still assume that the Perron eigenvalue $\lambda_1$ of $\bf A$ is simple and positive. Then from Theorem \ref{42}, we have $(P_1(t),P_2(t),\cdots ,P_n(t))\to \vec{u}$$=(u_1,u_2,\cdots ,u_n)$ almost surely in the stochastic model. Thus we can get the following corollaries from Lemma \ref{41}.

\begin{corollary}
	$P_i(t)\to 0\iff Y_i\notin \bar{T}$.
	\label{L6}
\end{corollary}
\begin{corollary}
	$P_i(t)\to 1\iff \bar{T}=T=\{Y_i\}$.
\end{corollary}

\begin{remark}
	From Corollary \ref{L6} we can see that the sufficient and necessary condition under which no phenotype dies out, namely $\forall i, P_i(t)\nrightarrow 0$, is that $\bar{T}$ contains all phenotypes. This is just Janson's last assumption.
\end{remark}

\begin{remark}
	If we find that $P_i(t)\to 0$, $P_j(t)\nrightarrow 0$ in an experiment, then we know that the phenotype $Y_j$ will never transform to $Y_i$ in any way. If we find that $P_i(t)\to 1$, then we know that the phenotype $Y_i$ will never transform to any other phenotypes.
\end{remark}

\section{Model generalization: non-exponential lifetime}
\label{S6}
In the previous sections, we assumed that the lifetime of a cell is exponentially distributed and independent of the type and number of its descendants. However, in real biological system, the lifetime distribution should be more like lognormal, gamma, Weibull, or exponentially modified Gaussian distribution \cite{Haw,Gol}. Furthermore, the time needed for division and conversion have different distributions \cite{Gol}. In this way the process is a multitype Bellman-Harris branching process (also called age-dependent branching process) \cite{mode}, no longer Markovian.

We can use the ``device of stages'' method to approximate a non-exponential random variable with several exponential random variables \cite{Cox}. This indicates that through adding supplementary sub-phenotypes, we can simulate a non-Markovian branching process with a Markovian branching process. See the example below:

$$\xymatrix{
	& (Y_1^1)\ar[r]^{\alpha_1^2}&(Y_1^2) \ar[r]^{\alpha_1^3}&(Y_1^3)\ar[r]^{\alpha_1^4}&Y_1+Y_1\\
	Y_1\ar[dr]^{\alpha_1^9}\ar[ur]^{\alpha_1^1}\ar[r]^{\alpha_1^5}&(Y_1^4)\ar[r]^{\alpha_1^6}&(Y_1^5)\ar[r]^{\alpha_1^7}   &(Y_1^6)\ar[r]^{\alpha_1^8}&Y_2 \\
	&(Y_1^7)\ar[r]^{\alpha_1^{10}}    & (Y_1^8)\ar[r]^{\alpha_1^{11}}& (Y_1^9)\ar[r]^{\alpha_1^{12}}& Y_2   
}$$

Here $(Y_1^1),\cdots, (Y_1^9)$ are supplementary sub-phenotypes. We artificially assume such supplementary sub-phenotypes exist just by technical reasons. They do not have biological meanings. When we count $Y_1^i$ as $Y_1$, the process has the same distribution with the original one. If we have convergence with this new process, then we also have convergence for the original process. An $Y_1$ cell has probability $\alpha_1^1/(\alpha_1^1+\alpha_1^5+\alpha_1^9)$ to divide into $Y_1+Y_1$, and probability $(\alpha_1^5+\alpha_1^9)/(\alpha_1^1+\alpha_1^5+\alpha_1^9)$ to convert into $Y_2$. Here we set $\alpha_1^1$, $\alpha_1^5$, and $\alpha_1^9$ to be large enough while keeping their proportions, so that the time needed for the first step is ignorable (exponential random variable with expectation $1/(\alpha_1^1+\alpha_1^5+\alpha_1^9)$).

Now the time distribution for division $Y_1\to Y_1+Y_1$ is approximately $Ex(\alpha_1^2)*Ex(\alpha_1^3)*Ex(\alpha_1^4)$, where $Ex(\alpha)$ is the density function of exponential random variable with parameter $\alpha$, and $*$ means convolution. Similarly, the time distribution for conversion $Y_1\to Y_2$ is approximately $\frac{\alpha_1^5}{\alpha_1^5+\alpha_1^9}Ex(\alpha_1^6)*Ex(\alpha_1^7)*Ex(\alpha_1^8)+\frac{\alpha_1^9}{\alpha_1^5+\alpha_1^9}Ex(\alpha_1^{10})*Ex(\alpha_1^{11})*Ex(\alpha_1^{12})$.

According to \cite{Cox}, any non-negative random variable can be approximated to any accuracy by such combination of convolutions of exponential random variables. Thus we can simulate such non-Markovian branching processes to any precision with Markovian branching processes. Here the lifetime of a cell can be non-exponential, and the lifetime of a cell can depend on the type and number of its descendants.

Now we can apply Theorem \ref{42} to those sub-phenotypes. The proportion of each sub-phenotype converges to a constant. Thus the proportion of each phenotype (including all its sub-phenotypes) converges to a constant. This proves the ``phenotypic equilibrium'' phenomenon in a more realistic stochastic model. In addition, the conclusions in Section \ref{S5} are still valid.

\begin{remark}
	The most unrealistic aspect of exponential lifetime is that the density function reaches maximum at $0$, but one cell cannot divide right after its birth. For the sum of several independent exponential variables, the density function at $0$ is $0$. By the law of large numbers, the density function of the sum of $n$ independent exponential variables with parameter $n\lambda$ will have a sharp peak near $\lambda$ when $n$ is large.
\end{remark}

\begin{remark}
	The proportion convergence theorem for non-Markovian (age-dependent) branching processes can be proved directly, but under stronger conditions \cite{mode}.
\end{remark}

\section{Conclusion and discussion}
We have presented a unified stochastic model for the population dynamics with cellular phenotypic conversions. We have given the sufficient and necessary condition under which the dynamical behavior of our model can be described by an $n$-state Markov chain. In general case, we have proved that the proportions of different phenotypes will tend to constants regardless of their initial values, and we have investigated the sufficient and necessary conditions under which one phenotype will die out or dominate. We also extend our model to non-Markovian case while keeping the above conclusions valid. In this way we explain experimental phenomenon in \cite{Gupta}.

As remarked in Section \ref{S42}, we improve a limit theorem in branching processes, which may be of theoretical interests.

Our results can also apply to cancer progression models with gene mutations, such as \cite{Bozic,KC13,MK15}. 

Since the phenotypic conversions have been reported in various cellular systems, such as \textit{E.coli} \cite{OTL04} and cancer cells \cite{FK77,YQ12}, we hope that our model here could be applied as a general framework in the study of multi-phenotypic populations of cells.

With the improvement of experiment methods, we will accumulate more and more data for single cell. In single cell level, the stochasticity is not negligible anymore. Thus we should build detailed stochastic models (branching process would be a good framework) for cell population dynamics. Such models will provide new insights and predictions. In the meanwhile, stochastic process related theories should attract more attention.

There are some possible improvements about this research. First, we assume that the branching process is time homogeneous, namely the birth and death rates keep the same for all time. However, as time goes on, the cell density increases, and the birth and death rates should change \cite{ZJ}. Thus a possible improvement is to have time-dependent or density-dependent $d_{ij}$. Second, we only prove the convergence for $t\to \infty$, but in experiments we only have finite observation time. Thus it is meaningful to estimate the convergence rate. Third, we only consider finite many phenotypes. We find that there is essential difficulty to build similar theory for infinite-type branching processes. However, there have been some works considering infinite many mutation states with multitype branching processes, such as \cite{MK15}.

\section*{Acknowledgements}
We would like to thank Professor Min-Ping Qian, Da-Yue Chen, Svante Janson and anonymous reviewers for helpful advice and discussions. Y. W. would like to thank Lingxue Zhu and Mingda Zhang for a special and inspiring discussion. 


%
%
%

\end{document}